%
%
%
%
%
%
\RequirePackage{fix-cm}
\documentclass[smallextended]{svjour3}       
\smartqed  
\usepackage{graphicx}
\usepackage{enumerate}
\usepackage{float}
\usepackage{caption}
\captionsetup{margin=20pt,format=hang,justification=justified}
\usepackage[justification=centering]{caption}
\usepackage{amsmath}
\usepackage{comment}
\usepackage{mathrsfs}
\usepackage{color}
\usepackage[colorlinks,urlcolor=blue,citecolor=blue]{hyperref}
\usepackage{cite}
\usepackage{multirow}
\usepackage{colortbl}
\usepackage{algorithmicx,algorithm}
\usepackage{threeparttable}
\usepackage{multirow}
\definecolor{mygray}{gray}{.70}

\newcommand{\tabincell}[2]{\begin{tabular}{@{}#1@{}}#2\end{tabular}}

%
%
%
%

%
\begin{document}

\title{A verifiable framework of entanglement-free quantum secret sharing with information theoretical security
}



\author{Changbin Lu$^1$          \and
        Fuyou Miao$^1$          \and
        Junpeng Hou$^2$          \and
        Wenchao Huang$^1$  \and
        Yan Xiong$^1$
}

\institute{Changbin Lu \at
              \email{lcb@mail.ustc.edu.cn}           
           \and
           Fuyou Miao \at
              \email{mfy@ustc.edu.cn}
           \and
           Junpeng Hou \at
             \email{Junpeng.Hou@utdallas.edu}
           \and
           Wenchao Huang \at
                \email{huangwc@ustc.edu.cn}
           \and
           Yan Xiong \at
            \email{yxiong@ustc.edu.cn}\\
           $^1$ School of Computer Science and Technology, University of Science and Technology of China, Hefei, China\\
          $^2$ Department of Physics, The University of Texas at Dallas, Richardson, Texas 75080-3021, USA
}
\date{Received: date / Accepted: date}

\maketitle

\begin{abstract}
Quantum secret sharing (QSS) schemes without entanglement have huge advantages in scalability and are easier to realize as they only require sequential communications of a single quantum system. However, these schemes often come with drawbacks such as exact ($n, n$) structure, security flaws and absences of effective cheating detections. To address these problems, we propose a verifiable framework by utilizing entanglement-free states to construct ($t, n$)-QSS schemes. Our work is the heuristic step towards information-theoretical security in entanglement-free QSS, and it sheds light on how to establish effective verification mechanism against cheating. As a result, the proposed framework has a significant importance in constructing QSS schemes for versatile applications in quantum networks due to its intrinsic scalability, flexibility and information theoretical security.
\keywords{Quantum cryptography \and Quantum secret sharing \and Entanglement-free \and Information theoretical security \and Verification mechanism}
\end{abstract}

\section{Introduction}

Sharing a secret among multiple users with efficiency is a significant problem in practice. Currently, several schemes called secret sharing (SS) are proposed to attain this goal. As an important class of SS schemes, ($t, n$) threshold SS [($t, n$)-SS] was proposed by Shamir \cite{Shamir1979} and Blakely \cite{Blakley1979} independently in $1979$ and it has been applied in various fields \cite{Harn2013,Boldyreva2002,Harn1994,Liu2016,Desmedt1994,Patel2016}. However, the security of classical cryptography systems usually relies on the assumptions of computational complexity, which can be easily weakened by the development of advanced computational tools such as quantum algorithms \cite{Shor1994,Grover1997}. So quantum cryptography has attracted much attention due to its inherent security, which is ensured by physical laws such as the resulted quantum no-cloning theorem \cite{Wootters1982,Dieks1982} and Heisenberg uncertainty principle. Due to those properties, using quantum secret sharing (QSS) (the description of abbreviations see Appendix.\ref{A}) to share secrets among users is more promising and reliable. Furthermore, it also provides a robust and secure solution to storage and manipulation of quantum states \cite{Cleve1999}. In $1999$ Hillery $et$ $al.$ (HBB) proposed the first QSS scheme \cite{Hillery1999}, which takes advantage of a three-qubit entangled Greenberger-Horne-Zeilinger (GHZ) state. In the scheme, a GHZ triplet is split and each particle is delivered to a user. Both users measure their own particles in natural basis and combine the results to obtain the dealer's measurement result. Following the similar idea, various (HBB-type) QSS schemes using quantum correlations in well-constructed entangled states are proposed \cite{Karlsson1999,Zhang2005,Gottesman2000,Fu2015,Marin2013,He2008}. However, these entanglement-based schemes are poor in scalability because engineering entanglement among large number of particles is unrealistic in current experiments \cite{Chen2005,Wang2016}. Moreover, quantum correlations can be degraded by decoherence which is caused by weak interactions with the environment and this may lead to undesired results \cite{Unruh1995}. Another problem in these schemes is that a failure measurement owing to inefficient detection can easily render an invalid round.

Thus, QSS schemes without entanglement are more practical to quantum network and have attracted a lot of attentions. For instance, in Schmid's scheme \cite{Schmid2005}, a single qubit is used to carry the secret which can be recovered with sequential phase shift operations. Later, a recent work \cite{Tavakoli2015} uses mutually unbiased (orthonormal)
bases (MUBs) to construct QSS scheme. In the paper, it shows that schemes based on multi-system entanglement can be mapped into much simpler ones involving only single system. Then, V. Karimipour $et$ $al.$ \cite{Karimipour2015} further optimized the method and improved the efficiency from $1/d$ to $1/2$. Recently, the papers \cite{Lu2018,Bai2018} enable the QSS scheme constructed using the MUBs of \cite{Tavakoli2015} to have a threshold or general access structure. Such schemes only require sequential communication of a single quantum state and thus, have significant advantages in scalability and make proof-of-concept experiments feasible.  However, they \cite{Schmid2005,Tavakoli2015,Karimipour2015} are all ($n, n$) structure which are not authentic threshold secret sharing in nature since all $n$ shareholders are required to participate in secret recovering. Besides, most of them have been shown to be vulnerable to attacks. For example, Schmid's scheme and Karimipour's scheme are no longer secure if someone adopts the attacks proposed in \cite{He2007,Lin2016}. Moreover, their schemes assume a trusted third party ($e.g.$ the dealer) in secret reconstruction, who is appointed to measure the processed quantum states; such a cheating is not considered in their schemes that a participant could announce fake random numbers in secret reconstruction to obstacle the other participants from recovering the real secret.

In this paper, we first generalize the method in previous works \cite{Schmid2005,Tavakoli2015,Karimipour2015,Lu2018,Bai2018} and  propose a verifiable framework for entanglement-free $(t,n)$-QSS schemes. Then a concrete implementation is given. Moreover, schemes constructed under the verifiable framework enjoy the following advantages over previous schemes.

($i$) \textit{Scalability}. These schemes only require sequential communication of a single quantum system, $i.e.$, each participant performs their own unitary operations sequentially on a quantum state. As a result, they can be easily applied to large number of participants.

($ii$) \textit{Threshold structure}. Though being a QSS, the exact $(t,n)$ threshold structure can be realized through incorporating versatile classical methods such as interpolation polynomial \cite{Shamir1979}, geometric structure \cite{Blakley1979}, linear code \cite{McEliece1981,Massey1993}, Chinese Remainder Theorem \cite{Asmuth1983,Mignotte1982}, etc.

($iii$) \textit{Verification mechanism}. The last participant (sequential order) can measure the final state in nature basis (easier than MUBs in \cite{Schmid2005,Tavakoli2015,Karimipour2015,Lu2018}) and this frees the requirement of trusted third parties. Thanks to the auxiliary particle, cheating of any participant and eavesdropping can be detected by verifying the consistency of the recovered results.

\section{Framework of entanglement-free ($t, n$)-QSS}

This section presents two frameworks for entanglement-free $(t,n)$-QSS, which allow any classical $(t,n)$-SS to be used to enable the threshold structure. In addition, several single quantum systems can be applied to the frameworks and make QSS schemes be entanglement free.

Let $t$, $n~(t\leq n)$ be two positive integers, then an entanglement-free ($t, n$)-QSS is a quantum version of ($t, n$)-SS.  It divides a secret into $n$ shares and allocates each share to a shareholder such that minimum $t$ shareholders are required to recover the secret, while no entangled states are involved in the protocol.

\subsection{Basic framework of entanglement-free ($t, n$)-QSS}
An entanglement-free ($t, n$)-QSS consists of four algorithms, \emph{classical private Share Distribution} \textbf{SD}($s, {{\mathcal{U}}}$), \emph{Secret Encoding} \textbf{SE}($\varphi, S$),  \emph{Sequential Operation on single quantum system} \textbf{SO}($\varphi_0, {{\mathcal{U}_m}}, \Omega_m, {P_m}$) and \emph{Secret Reconstruction} \textbf{SR}($\varphi_m, {P_m}$).

\textbf{SD}($s, {{\mathcal{U}}}$)-it takes the private value $s$ and the set of $n$ shareholders ${{\mathcal{U}}}$ as input and generates $\Omega $, the set of $n$ shares as output. It is the same as the share distribution in classical $(t,n)$-SS. In this algorithm, a dealer divides the private value $s$ (the secret in classical $(t,n)$-SS) into $n$ shares and allocates each share to a shareholder securely. Note that this algorithm enables the QSS scheme with a ($t,n$) threshold structure and the private value $s$ is not the secret $S$ to be shared in our framework. We enforce no constraint at this step, $i.e.$, any classical $(t,n)$-SS could be applied to the algorithm.

\textbf{SE}($\varphi , S$)-it provides $\varphi_0$ as output, which is obtained through encoding the secret $S$ into a given quantum state $\varphi $. In this algorithm, the dealer encodes the secret $S$ into an initial single quantum state $\varphi $ and thus transforms $\varphi $ into a new state $\varphi_0$. Then, the dealer sends  $\varphi_0$ to a group of shareholders (participants) who would later collaborate to recover the secret.

\textbf{SO}($\varphi_0, {{\mathcal{U}_m}}, \Omega_m, {P_m}$)-it takes the secret-encoded quantum state $\varphi_0 $, the set of $m~(m\geq t)$ participants ${{\mathcal{U}_m}}$, the corresponding share set $\Omega_m $ and the set of $m$ random numbers ${P_m}$ as input. After $m$ sequential unitary operations performed by ${{\mathcal{U}_m}}$, $\varphi_0$ is mapped into $\varphi_m$ as output. In this algorithm, $m$ participants ${{\mathcal{U}_m}}$ perform sequential unitary operations on the received $\varphi_0$ to embed their shares into the state. Specifically, on receiving $\varphi_{i-1}, 1\le i\le m$, each participant ${U_i}\in {{\mathcal{U}_m}}$ computes a component $c_i$ with its share ${s_i}\in\Omega_m $ and embeds $c_i$ with a private random number $p_i\in{P_m}$ into $\varphi_{i-1}$ to generate $\varphi_i$ by performing some unitary operations. As a result, $\varphi_0$ is finally transformed into quantum state $\varphi_m$. Note that for a single quantum system, many unitary operations such as phase shift, generalized Pauli operation, etc., can be utilized here.

\textbf{SR}($\varphi_m , {P_m}$)-it takes $\varphi_m$ and the set of $m$ private random numbers $P_m$ as input while recovering the secret $S$ to be output. In this algorithm, the last participant $U_m\in {\mathcal{U}_m} $ measures $\varphi_m$ and publishes the measurement result. All participants exchange private random numbers in ${P_m}$ and recover the secret $S$ independently from the measurement result.


\begin{remark}
As mentioned above, this framework is flexible since all classical $(t,n)$-SS schemes can be used in share distribution and multiple unitary operations can be selected to embed private components into target state. Moreover, due to the application of single quantum system, no entanglement is required in the framework. Therefore, schemes under this framework are more practical in experiments.
\end{remark}

\subsection{Verifiable framework of entanglement-free ($t, n$)-QSS}
Obviously, if any participant embeds a wrong component ($e.g.$, using the wrong share) or the quantum system is eavesdropped in the algorithm of \textbf{SO}($\varphi_0 , {{\mathcal{U}_m}}, \Omega_m , {P_m}$), or the last participant publishes a wrong measurement result in \textbf{SR}($\varphi_m , {P_m}$), some participants will recover a wrong secret. As a matter of fact, the above attacks can be effectively thwarted by simply adding a verification quantum state to the framework. The verifiable framework uses $k-1~(k\geq2)$ single quantum states to encode $k-1$ secrets. Moreover, it utilizes an extra single quantum state to encode the verification value,  by which each participant can verify the correctness of the recovered secrets. As a result, the improved framework is able to detect eavesdropping attacks and cheating attacks.

The verifiable framework consists of four algorithms \textbf{SD}($s, {{\mathcal{U}}}$), \textbf{SE}($\{ \varphi_{v},$ $S_v| v=1,2,\dots,k \}$), \textbf{SO}($\{\varphi_{v0}|v=1,2,\dots,k\}, \mathcal{U}_m, \Omega_m , P_m$) and \textbf{SR}($\{\varphi_{vm}|v=1,2,\dots,k\}, {P_m}$).

The verifiable framework shares the same algorithm \textbf{SD}($s$, ${{\mathcal{U}}}$) with the basic one.

\textbf{SE}($\{\varphi_{v}, S_v| v=1,2,\dots,k$\})- In this algorithm, the dealer first encodes $k-1$ secrets $S_1,S_2,\dots,S_{k-1}$ into initial single quantum states $\varphi_{1},\varphi_{2},\dots,\varphi_{k-1}$ respectively. Then, it encodes $S_{k}$ into the initial state $\varphi_{k}$ as the verification such that $S_{k}=f(S_1,S_2,\dots,S_{k-1})$ holds, where $f(.)$ is a verification function. After encoding, these initial states are transformed into $\varphi_{10},\varphi_{20},\dots,\varphi_{k0}$ respectively. Finally, the dealer sends these states to participants for secret reconstruction.

\textbf{SO}($\{\varphi_{v0}|v=1,2,\dots,k\},\mathcal{U}_m,\Omega_m, P_m$)- Suppose there are $m$ participants ${{\mathcal{U}_m}=\{U_1,U_2,\dots,U_m\}}$ want to recover secrets. The first participant $U_1$  $\in {{\mathcal{U}_m}}$ performs unitary operations on the received $\varphi_{v0}$ to embed its component $c_1$, generated from share ${s_1}\in\Omega_m $ and a random number $p_1 \in P_m$, into each state. As a result, $\varphi_{v0}$ are transformed into states $\varphi_{v1}$. The next participant $U_2\in \mathcal{U}_m$ repeats the operations as $U_1$ does. Finally, all states are transformed into $\varphi_{vm},v=1,2,\dots,k$ by the last participant $U_m$.

\textbf{SR}($\{\varphi_{vm}|v=1,2,\dots,k\}, {P_m}$)- In the algorithm, the last participant $U_m$ measures all quantum states $\varphi_{vm},v=1,2,\dots,k$ and publishes measurement results among $\mathcal{U}_m$. All $m$ participants in $\mathcal{U}_m$ mutually exchange their private random numbers in ${P_m}$. Then each participant recovers secrets $S_1,S_2,\dots,S_{k-1}$ together with the verification value $S_{k}$ independently and checks whether
$S_{k}=f(S_1,S_2,\dots,S_{k-1})$ holds. If it is true, all secrets are recovered correctly; otherwise, the recovered results are discarded.
\begin{remark}
The verifiable framework uses an extra single quantum state to enable verification. As a matter of fact, only one single quantum state is enough for the purpose of both encoding and verification. For a given secret $S\in {\rm{GF}}(q)=\{0,1,\dots,q-1\}$, we first encode it into the form $S'={S|H(S)}\in {\rm{GF}}(d)$ as the new secret, where $H(S)$ is some form of verification value of $S$, $e.g.$, the one-way hash value of $S$ or checksum of $S$ in terms of some type of coding rule; the notation $|$ denotes the operation of concatenation, $d$ is a prime larger than $q$. Then, we employ the basic framework to share the new secret $S'$ (suppose that the basic framework shares a secret in GF$(d)$). In secret reconstruction, each participant checks whether the recovered secret has the form of ${S|H(S)}$, if it does, the secret $S$ is correctly recovered; otherwise, the secret is wrong.
\end{remark}

\section{Quantum Fourier transform (QFT)-based entanglement-free $(t,n)$-QSS}
In this section, we first introduce three basic quantum operations, and then present a concrete QFT based $d$-level $(t,n)$-QSS by following the verifiable framework. Finally, the correctness of the scheme is presented.
\subsection{Three quantum operations}
Let us introduce three useful quantum operations, $i.e.$, QFT, Inverse QFT and generalized Pauli operation. They play important roles in our QSS scheme.
\begin{definition} Quantum Fourier transform is a linear operator performed on $d$ orthogonal basis $\left| 0 \right\rangle ,\left| 1 \right\rangle ,\dots,\left| {d - 1} \right\rangle $ in the following way
\begin{equation}
{\rm{QFT}}\left| j \right\rangle  = \frac{1}{{\sqrt d }}\sum\limits_{k = 0}^{d - 1} {{\omega ^{j \cdot k}}} \left| k \right\rangle  ,
\end{equation}
 where $\omega=e^{2\pi i/d}$ is the $d$th root of unity.
\end{definition}

\begin{definition} Inverse Quantum Fourier Transform (IQFT) is the inverse operator of QFT
\begin{equation}
{\rm{QFT}}{}^{ - 1}\left| k \right\rangle  = \frac{1}{{\sqrt d }}\sum\limits_{j = 0}^{d - 1} {{\omega ^{ - k \cdot j}}} \left| j \right\rangle,
\end{equation}
and it is also a linear operator.
\end{definition}

\begin{definition}
On Hilbert space of a $d$-level quantum system, the generalized Pauli operation $U_{m,n}$ is
\begin{equation}
{U_{m,n}} = \sum\limits_{k = 0}^{d - 1} {{\omega ^{n \cdot k}}\left| {k + m} \right\rangle } \left\langle k \right|,
\end{equation}
where $ m,n \in {\rm{GF}}(d) $, $d$ is a prime \cite{Thas2009}. With QFT, the generalized Pauli operation $U_{m,n}$ may complete the following transformation:
\begin{equation}
\begin{aligned}
\label{4}
{U_{m,n}}{\rm{QFT}}\left| j \right\rangle
    &= {U_{m,n}}\frac{1}{{\sqrt d }}\sum\limits_{k = 0}^{d - 1} {{\omega ^{j \cdot k}}\left| k \right\rangle } \\
    &= \frac{1}{{\sqrt d }}\sum\limits_{k = 0}^{d - 1} {{\omega ^{(j + n) \cdot k}}\left| {k + m} \right\rangle } .
\end{aligned}
\end{equation}
\end{definition}
\subsection{The proposed QSS scheme}
According to the verifiable framework, the scheme is decomposed in four algorithms. Then, the dealer and participants can complete the secret sharing task by running the Algorithm 1 to Algorithm 4 sequentially (see Figure.\ref{F1}). To make the scheme clear, we will discuss each algorithm in details.

\emph{classical private Share Distribution}: In this algorithm, the dealer Alice generates and allocates classical private shares to each shareholder Bob$_j, j=1,2,\dots,n$. For each private value, Alice will generate $n$ shares and send one share to each participant. In this way, Alice uses $k$ private values to generate shares and each participant will receive $k$ shares.

\begin{algorithm}[H]
\caption{classical private Share Distribution}
{\bf Input:}
private values $s_v,v=1,2,\dots,k$;
the set of $n$ shareholders ${\mathcal{U}}=\{$Bo$b_j|j=1,2,\dots,n\}$  with respective
public information $x_j$
\\
{\bf Output:}
$kn$ shares $\Omega_{vj}$
\\
\rule{\textwidth}{0.1mm}

{\bf Steps:}
\begin{algorithmic}[1]

\State Dealer Alice picks up $k$ random polynomials $f_v(x),v=1,2,\dots, k$ of degree at most $t-1$ ($t\leq n$) over finite field GF$(d)$:
	 $f_v(x) = {a_{v0}} + {a_{v1}}x + \dots + {a_{v(t - 1)}}{x^{t - 1}}\bmod d,$
    where $s_v = {a_{v0}} = f_v(0)$ denote the private values, and all coefficients ${a_{vr}}, r = 0,1,\dots,t - 1$, are uniformly chosen from the finite field GF$(d)$ for prime $d$.

\State Given shareholders ${\mathcal{U}}$, Alice computes $f_v({x_j})$ as the shares of shareholder Bob$_j$ for $j = 1,2,\dots,n$, where ${x_j} \in {\rm{GF}}(d)$ is non-zero number with ${x_j} \ne {x_l}$ for $j \ne l$.

\State $\Omega_{vj}=f_v(x_j)$ are the $kn$ shares and then sent to the corresponding shareholder Bob$_j$ securely by using quantum key distribution.
\end{algorithmic}
\end{algorithm}
\emph{Secret Encoding}: Assume that Alice wants to share $k-1$ secrets $S_u\in {\rm{GF}}(d),u=1,2,\dots,k-1$ among shareholders. In order to establish a verifiable $(t,n)$-QSS, Alice first picks $S_k \in {\rm{GF}}(d)$ (verification value) with $\prod\nolimits_{u=1}^{k-1} {S_u}=S_k\bmod d$ for public prime $d$, and then uses single quantum system to realize the scheme.

\begin{algorithm}[H]
\caption{Secret Encoding}
{\bf Input:}
$k$ initial single qudits $\varphi_{v}= {\left| {{\Psi _v}} \right\rangle },v=1,2,\dots,k$;
$k-1$ secrets $S_u,u=1,2,\dots,k-1$ and a verification value $S_k$
\\
{\bf Output:}
$k$ encoded single qudits $\varphi_{v0}= {\left| {{\Psi _v}} \right\rangle _0},v=1,2,\dots,k$
\\
\rule{\textwidth}{0.1mm}
{\bf Steps:}
\begin{algorithmic}[1]

\State Alice prepares $k$ qudits which are in the state ${\left| {{\Psi _v}} \right\rangle }=\left| 0 \right\rangle$ as input.

\State For each qudit, she performs QFT and the generalized Pauli operation $U_{0,{p_{v0}+q_{v0}}}$ to encode $ {p_{v0}}={S_v}$ and $q_{v0} =d-s_v$ into the qudit.

\State After operations, she gets $k$ qudits $ {\left| {{\Psi _v}} \right\rangle _0}$ as output which will be used for secret reconstruction.
\end{algorithmic}
\end{algorithm}
\emph{Sequential Operation of single quantum system}: To share the secrets $S_u,u=1,2,\dots,k-1$, arbitrary $m$ $(m\geq t)$ participants ${\rm{Bob}}_j,j=1,2,\dots,m$ can cooperate to achieve the goal. In this algorithm, each participant only needs to complete their own operations on the communicated qudits sequentially.

\begin{algorithm}[H]
\caption{Sequential Operation of single quantum system}
{\bf Input:}
the secret-encoded quantum states $\varphi_{v0}={\left| {{\Psi _v}} \right\rangle _0}$, the set of $m~(m\geq t)$ participants ${\mathcal{U}_m}=\{$Bo$b_j|j=1,2,\dots,m\}$, the corresponding share set $\Omega_{vm}$ and the set of $m$ random numbers ${P_{vm}} \in $GF$(d)$
\\
{\bf Output:}
$k$ encoded single qudits $\varphi_{vm}= {\left| {{\Psi _v}} \right\rangle _m},v=1,2,\dots,k$
\\
\rule{\textwidth}{0.1mm}
{\bf Steps:}
\begin{algorithmic}[1]

\State Suppose Bob$_1$ is the first participant to receive $\varphi_{v0}$. He first prepares $k$ mutually independent private random numbers $p_{v1} \in {P_{vm}}$ and computes the component ${q_{v1}} = {c_{v1}} = f_v({x_1})\prod\nolimits_{r = 2}^m {\frac{{{x_r}}}{{{x_r} - {x_1}}}} \bmod d$ based on his shares. Then he performs the generalized Pauli operations ${U_{p_{v1},{p_{v1}+q_{v1}}}}$ on each ${\left| {{\Psi _v}} \right\rangle _0}$ respectively. As a result, he gets the processed states $ {\left| {{\Psi _v}} \right\rangle _1}$. Then Bob$_1$ delivers the states ${\left| {{\Psi _v}} \right\rangle _1}$ to Bob$_2$.

\State For each of the other participants Bob$_j,j = 2,3,\dots,m$, upon receiving the $k$ qudits, they repeat the same procedure sequentially as Bob$_1$ does. That is, each Bob$_j$ completes operations ${U_{{p_{vj}},{{p_{vj}}+{q_{vj}}}}}$ on ${\left| {{\Psi _v}} \right\rangle _{j - 1}}$ and gets the states ${\left| {{\Psi _v}} \right\rangle _j}$, where ${p_{vj}}\in {P_{vm}}$ and $q_{vj}=c_{vj} = f_v({x_j})\prod\nolimits_{r = 1,r \ne j}^m {\frac{{{x_r}}}{{{x_r} - {x_j}}}} \bmod d$. Subsequently, Bob$_j,j=2,3,\dots,m-1$ delivers the qudits ${\left| {{\Psi _v}} \right\rangle _j}$ to the next participant Bob$_{j + 1}$.

\State After operations, the last one Bob$_m$ gets the $k$ qudits $ {\left| {{\Psi _{v}}} \right\rangle _m}$ as output.
\end{algorithmic}
\end{algorithm}	

\emph{Secret Reconstruction}: Finally, the last participant Bob$_m$ keeps and performs IQFT on the $k$ qudits for accurate measurement. Then, all participants can recover the secrets and verify their correctness.

\begin{algorithm}[H]
\caption{Secret Reconstruction}
{\bf Input:}
$k$ encoded single qudits $\varphi_{vm}= {\left| {{\Psi _v}} \right\rangle _m}$;
the set of $m$ private random numbers ${P_{vm}} \in $GF$(d)$
\\
{\bf Output:}
$k-1$ secrets $S_u,u=1,2,\dots,k-1$ and a verification value $S_k$
\\
\rule{\textwidth}{0.1mm}
{\bf Steps:}
\begin{algorithmic}[1]

\State The last participant Bob$_m$ performs IQFT on the $k$ qudits ${\left| {{\Psi _v}} \right\rangle _m}$ and measures them in the computational basis $\{ \left| j \right\rangle ,j = 0,1,\dots,d - 1\} $. Then, he publishes the measurement results ${R_v},v=1,2,\dots,k$ to all participants.

\State Each of $m$ participants exchanges their random numbers (e.g. using the unconditionally
secure bit commitment protocol \cite{Kent2012} or exchanging in a simultaneous way), then they can compute the values $p_{v0}=R_v-\sum_{j = 1}^m {p_{vj}} \bmod d$. Because in a valid run, the measurement results ${R_v}$, the dealer's secrets $p_{v0}$ and participants' private random numbers $p_{vj},j = 1,\dots,m$ satisfy globally
    \begin{equation} \label{5}
    {R_v}  =\sum\nolimits_{j = 0}^m {{p_{vj}}} \bmod d,v = 1,2,\dots,k.
    \end{equation}

\State If the recovered values $p_{10},p_{20},\dots,p_{k0}$ satisfy the following relation
	\begin{equation}
\prod\nolimits_{u=1}^{k-1} {p_{u0}}  = p_{k0}\bmod d,
\end{equation}
they can make sure the secret sharing attempt is not corrupt and thus the recovered secrets and the verification value ${S_v}=p_{v0}$ are true; otherwise they are aware that this round is invalid and thus abort it.

\end{algorithmic}
\end{algorithm}

\begin{remark}
Following the above idea, a qudit in a normalized unknown state $\left| \varphi  \right\rangle  = \sum\nolimits_{j = 0}^{d - 1} {{\alpha _j}} \left| j \right\rangle,$ $\sum\nolimits_{j = 0}^{d - 1} {{{\left| {{\alpha _j}} \right|}^2}}  = 1$ can also be shared among at least $t$ shareholders by substituting the generalized Pauli operation $U_{0,q_j}$ for ${U_{{p_{j}},{{p_{j}}+{q_{j}}}}}$.
\end{remark}

\begin{figure}[htbp]
  \centering
  \includegraphics[width=1.0\textwidth]{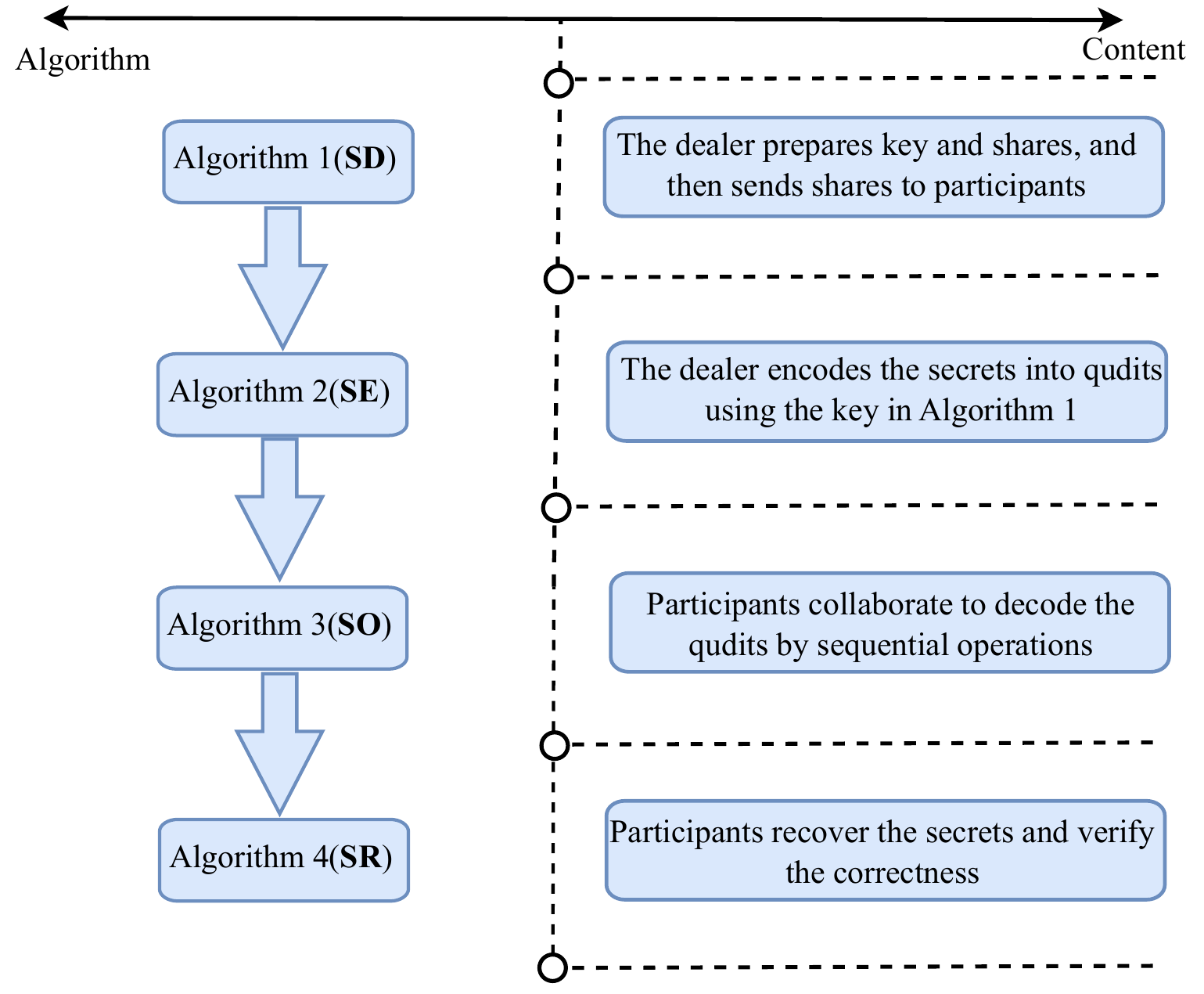}\\
  \caption{Flowchart of our proposed framework}
  \label{F1}
\end{figure}
\subsection{Correctness of the scheme}
Now, we show the correctness of the proposed scheme. The proposed scheme uses $k$ qudits $ {\left| {\Psi_v } \right\rangle _0}, v=1,2,\dots,k$ to share $k-1$ secrets and one verification value. Since all operations on each qudit are similar, we remove the subscript $v$ for simplicity, $e.g.$, uniformly use ${\left| {\Psi} \right\rangle _0}$ to denote each qudit ${\left| {\Psi}_v \right\rangle _0},v=1,2,\dots,k$.

With dealer's operations, qudits are initially prepared in the state $ {\left| {\Psi } \right\rangle _0}=\frac{1}{{\sqrt d }}\sum\nolimits_{k = 0}^{d - 1} {{\omega ^{(p_0+ q_0) \cdot k}}\left| {k } \right\rangle }$.  After all $m$ participants sequentially complete their operations, the final state becomes
\begin{equation} \label{7}
\begin{aligned}
{\left| \Psi  \right\rangle _m}
    &=\left( {\prod\limits_{j = 1}^m {{U_{{p_j},{p_j} + {q_j}}}} } \right){\left| \Psi  \right\rangle _0}\\
    &=\frac{\xi_m}{{\sqrt d }}\sum\limits_{k = 0}^{d - 1} {{\omega ^{\left( {\sum\nolimits_{j = 0}^m {({p_j} + {q_j})} } \right) \cdot k}}} \left| k \right\rangle \\
    &=\frac{\xi_m}{{\sqrt d }}\sum\limits_{k = 0}^{d - 1} {{\omega ^{(\sum\nolimits_{j = 0}^m {{p_j}} {\rm{ + }}d - s + \sum\nolimits_{j = 1}^m {{c_j}} ) \cdot k}}\left| k \right\rangle } \\
    &=\frac{\xi_m}{{\sqrt d }}\sum\limits_{k = 0}^{d - 1} {{\omega ^{(\sum\nolimits_{j = 0}^m {{p_j}}+L\cdot d ) \cdot k}}\left| k \right\rangle }, (L\in Z)
\end{aligned}
\end{equation}
where $\xi_m={\omega ^{ - \sum\nolimits_{a = 1}^m {{p_a}} \left( {\sum\nolimits_{b = 0}^{a} {({p_b} + {q_b})} } \right)}}$ is an overall phase term (the more detailed proof see the Appendix.\ref{B}). Since in Shamir's ($t ,n$)-SS, any participant can recover the secret by summing up all components, $i.e.$,
\begin{equation}
\begin{aligned}
s=\sum_{j=1}^{m}{c_j} \bmod d=\sum_{j=1}^{m}{f({x_j})\prod\limits_{r = 1,r \ne j}^m {\frac{{{x_r}}}{{{x_r} - {x_j}}}} \bmod d}.
\end{aligned}
\end{equation}
Thus we have $\sum_{j = 1}^m {{c_j}}=Nd+s,N \in Z$ to ensure the correctness of the final state in Eq.(\ref{7}). Then Bob$_m$ performs IQFT on the qudit, giving
\begin{equation}
{\rm{QFT}}{}^{ - 1}{\left| \Psi  \right\rangle _m} = \left| {\sum\nolimits_{j = 0}^m {{p_j}} \bmod d} \right\rangle .
\end{equation}
After measuring the state in the computational basis, Bob$_m$ can get the results satisfying Eq.(\ref{5}). By exchanging random numbers, all participants can recover the secrets from the measurements as shown in \emph{Secret Reconstruction-Step2}. Note that, in the last step, we drop the phase term $\xi_m$ since a global phase shift does not affect the (inverse) Fourier components of a given state.

To illustrate the feasibility of the scheme, we hereby give an example of (4, 6) threshold quantum secret sharing, which shares a classical secret by a single qudit, as follows.

During the \emph{classical private Share Distribution}, the dealer Alice first chooses a random polynomial $f(x)$ of degree 3 over GF$(23)$: $f(x) = 17 + 5x + 12{x^2} + 6{x^3}\bmod 23$, and thus the private value is $s={a_0} = f(0) = 17$ with threshold $t = 4$ and the prime $d = 23$. Then she computes and allocates a share $f({x_j})$ to each shareholder Bob$_j$ with public information ${x_j} = j + 1$ for $j = 1,2,\dots,6$. As a result, $f({x_1} = 2) = 123\bmod 23 = 8$, $f({x_2} = 3) = 302\bmod 23 = 3$, $f({x_3} = 4) = 15$, $f({x_4} = 5) = 11$, $f({x_5} = 6) = 4$ and $f({x_6} = 7) = 7$.

Then, in the \emph{Secret Encoding} algorithm, Alice shares the secret $S=14$ among the participants Bob$_j$, $j=1,2,\dots,6$. She first performs QFT on the state $| \Psi \rangle = | 0 \rangle$ and generalized Pauli operation $U_{0,{p_{0}+q_{0}}}$ to encode $ {p_{0}}=S=14$ and $q_{0} =d-s=6$ into the qudit. Finally she gets the qudit $| \Psi \rangle _ { 0 }=\frac { 1 } { \sqrt { 23 } } \sum _ { k = 0 } ^ { 22 } \omega ^ {  20 \cdot k } | k \rangle$.

In the algorithm of \emph{Sequential Operation of single quantum system}, suppose participants Bob$_j$ with $j\in \mathcal{U}_4=\{1,3,4,6\}$, want to reconstruct the secret, they each prepare a random number, such as ${p_1}=0,{p_3}=3,{p_4}=16,{p_6}=9$ and compute a component by Lagrange interpolation as follows:
${q_1}={c_1} = f({x_1})\prod\limits_{r \in \mathcal{U}_4, r \ne 1} {\frac{{{x_r}}}{{{x_r} - {x_1}}}} \bmod p = f(2) \cdot \frac{4}{{4 - 2}} \cdot \frac{5}{{5 - 2}} \cdot \frac{7}{{7 - 2}}\bmod 23 = 22,$
${q_3}={c_3} = 15 \cdot \frac{2}{-2} \cdot \frac{5}{1} \cdot \frac{7}{3}\bmod 23 = 9,$
${q_4}={c_4} = 11 \cdot \frac{2}{{-3}} \cdot \frac{4}{{-1}} \cdot \frac{7}{{2}}\bmod 23 = 3,$
${q_6}={c_6} = 7 \cdot \frac{2}{{-5}} \cdot \frac{4}{{-3}} \cdot \frac{5}{{-2}}\bmod 23 = 6.$
Then they sequentially perform the generalized Pauli operations $U_{0,22},U_{3,12},U_{16,19},U_{9,15}$ on the processed qudit received from Alice. At last the last participant Bob$_6$ keeps the qudit
\begin{equation}
\begin{aligned}
{\left| \Psi  \right\rangle _4}
    &=\left( {\prod\limits_{j = 1}^4 {{U_{{p_j},{p_j} + {q_j}}}} } \right){\left| \Psi  \right\rangle _0}\\
    &=\frac{\xi_4}{{\sqrt 23 }}\sum\limits_{k = 0}^{22} {{\omega ^{\left( {\sum\nolimits_{j = 0}^4 {({p_j} + {q_j})} } \right) \cdot k}}} \left| k \right\rangle \\
    &=\frac{\xi_4}{{\sqrt 23 }}\sum\limits_{k = 0}^{22} {{\omega ^{(14+0+3+16+9+6+22+9+3+6) \cdot k}}\left| k \right\rangle } \\
    &=\frac{\xi_4}{{\sqrt 23 }}\sum\limits_{k = 0}^{22} {{\omega ^{88 \cdot k}}\left| k \right\rangle },
\end{aligned}
\end{equation}
where $\xi_4$ is an overall phase term.

In the \emph{Secret Reconstruction}, the last participant Bob$_6$ performs IQFT on the qudit ${\left| \Psi  \right\rangle _4}$ and measures it in the computational basis. After that he can get the result $R=19$ because of ${\rm{QFT}}{}^{ - 1}{\left| \Psi  \right\rangle _4} = \left| 88 \bmod 23 \right\rangle= \left| 19 \right\rangle$ and publishes it. Finally, each of these four participants exchanges their random numbers, the secret $S=19-0-3-16-9\bmod 23=14$ can be recovered.

\section{Security analysis}
Schemes for secret sharing have to guarantee security. But, almost two decades after the first QSS, there exists no such a scheme (with or without entanglement) which has been proven to be unconditionally secure against cheating of dishonest users. A recent work \cite{Kogias2017} presents a feasible entanglement-based continuous variable QSS scheme. It derives sufficient conditions for providing unconditional security of the dealer's classical secret against general attacks of an eavesdropper and arbitrary cheating strategies. Furthermore, the work's results pave the way for experimental demonstration of an unconditionally secure QSS.

In this paper, we employ Shamir's ($t, n$)-SS to support the threshold structure and provide information theoretical security. In \emph{Secret Encoding}, the dealer Alice adds each secret (including the verification value) $S_v={p_{v0}},v=1,2,\dots,k$ to the private value $s_v$ by the generalized Pauli operation $U_{0,{p_{v0}}+{q_{v0}}}$. After the last participant ${\rm{Bob}}_m$ completes his Pauli operation and IQFT, each measurement result is $R_v=\sum_{j=0}^{m}p_{vj}+d-s_v+\sum_{j=1}^{m}{c_{vj}}\bmod d, v=1,2,\dots,k$. This process can be considered as an encryption of the secret ${p_{v0}}$. Obviously, with the published $R_v$, a participant can reconstruct ${p_{v0}}$ only after collecting all private random numbers $p_{vj}, j=1,2,\dots, m$ and all $m$ components $c_{vj}$ correctly. As a result, many attacks do not work, including intercept-resend attack and entangle-and-measure attack mounted by an external eavesdropper or participant attack in association with entanglement swapping \cite{Gao2007,He2007}. These attacks' more detailed analysis are mostly similar to works \cite{Lu2018,Qin2015}.

In the following, we use \textbf{Theorem} \ref{T1} to prove the security against the collusion attack by less than $t$ participants. If less than $t$ participants obtain no information about the secret, then a $(t,n)$-QSS scheme can be concluded to be perfect with respect to the probability distribution of secret over secret space.

\begin{theorem} \label{T1}The proposed $(t,n)$-QSS scheme is perfect with respect to the probability distribution of secret over secret space. That is,
\begin{equation}
I(S_v;{\Omega}) = H(S_v) - H(S_v|{\Omega}) =0,
\end{equation}
where $H(S_v)$ is the information entropy of the secret $S_v,v=1,2,\dots,k$, ${\Omega}$ denotes the set of shares available for less than $t$ participants and $I(S_v;{\Omega})$ represents the mutual information of $S_v$ with ${\Omega}$.
\end{theorem}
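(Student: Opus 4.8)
The plan is to show that conditioning on the shares held by any coalition of fewer than $t$ participants leaves the distribution of $S_v$ unchanged, so that $H(S_v\mid\Omega)=H(S_v)$ and the mutual information vanishes. First I would unfold the definition $I(S_v;\Omega)=H(S_v)-H(S_v\mid\Omega)$ and observe that it suffices to prove $\Pr(S_v=s\mid\Omega=\omega)=\Pr(S_v=s)$ for every secret value $s$ and every realisation $\omega$ of the at most $t-1$ shares available to the coalition; once this statistical independence is established, the entropy identity is immediate.

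The core of the argument is the perfect secrecy of the underlying Shamir scheme, which I would state and use as a counting lemma. Recall that each secret is tied to the private value $s_v=f_v(0)$, the constant term of a degree-$(t-1)$ polynomial over $\mathrm{GF}(d)$ whose coefficients are drawn uniformly and independently, and that reconstruction proceeds through the component sum $\sum_{j}c_{vj}=s_v\bmod d$ obtained by Lagrange interpolation. The key step is to show that for any fixed set of $w\le t-1$ evaluation pairs $\{(x_j,f_v(x_j))\}$ and any candidate constant term $c\in\mathrm{GF}(d)$, the number of admissible polynomials of degree at most $t-1$ interpolating those $w$ points with $f_v(0)=c$ is the same for every value of $c$, since the remaining $t-1-w\ge 0$ coefficients still range freely. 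This yields $\Pr(s_v=c\mid\Omega=\omega)=1/d$ independently of $\omega$, i.e. fewer than $t$ shares carry no information about $s_v$.

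Next I would transfer this secrecy from $s_v$ to $S_v$. Because the dealer encodes $p_{v0}=S_v$ together with $q_{v0}=d-s_v$, and the published outcome satisfies $R_v=\sum_{j=0}^{m}p_{vj}\bmod d$ only after the components cancel $s_v$, the value $S_v$ is a deterministic function of quantities — the private value $s_v$ and the honest participants' random numbers — that are themselves uniform and independent of $\Omega$ from the coalition's viewpoint. I would therefore argue that the joint distribution of $(S_v,\Omega)$ factorises, so the posterior on $S_v$ equals its prior and $H(S_v\mid\Omega)=H(S_v)$, giving $I(S_v;\Omega)=0$.

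I expect the main obstacle to be this transfer step rather than the Shamir counting lemma: one must verify that the coalition's \emph{total} view — its shares, its own random numbers, and any publicly announced $R_v$ — remains jointly independent of $S_v$, and in particular that no partial combination of the Lagrange components available to fewer than $t$ parties leaks a residual correlation with $s_v$ and hence with $S_v$. Making this independence precise, and confirming that the quantum encoding contributes nothing beyond the classical transcript, is where the argument will need the most care.
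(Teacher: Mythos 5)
Your proposal is correct and follows essentially the same route as the paper's proof: establish that fewer than $t$ Shamir shares leave the private value $s_v$ uniformly distributed over $\mathrm{GF}(d)$, then transfer this secrecy to $S_v=p_{v0}$ through the one-time-pad relation $R_v'=p_{v0}+d-s_v\bmod d$, so that $H(S_v\mid\Omega)=H(S_v)=\log d$. Your explicit counting lemma for the interpolating polynomials and your insistence on verifying that the coalition's full view (shares, own randomness, published $R_v$) is jointly independent of $S_v$ are more careful renderings of steps the paper asserts rather than argues, but the underlying argument is the same.
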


\begin{proof}
In normal case, $m(m\ge{t}) $ participants $\{{\rm{Bob}}_j, j=1,2,\dots,m\}$, with the corresponding shares $\{f_v(x_j),v=1,2,\dots,k\}$, can cooperate to recover all secrets $S_v$.

Without losing generality, suppose exactly $t-1$ participants $\{{\rm{Bob}}_j, j=1,2,\dots,t-1\}$ conspire to achieve the secrets. At first, the participant ${\rm{Bob}}_{t-1}$ measures ${\left| {{\Psi _v}} \right\rangle _{t-1}}$ and publishes the results $R_v=\sum_{j=0}^{t-1}p_{vj}+d-s_v+\sum_{j=1}^{t-1}{c_{vj}}\bmod d$. After exchanging private random numbers and shares $\{p_{vj}, f_v(x_j),j=1,2,\dots,t-1\} $, each participant ${\rm{Bob}}_j$ obtains the following results
${R_v}'={p_{v0}}+d-s_v\bmod d$.

Since from the view of participants, each of the secrets ${p_{v0}}$ selected from GF($d$) by Alice is a uniform and random variable, thus, $S_v={p_{v0}}$ are indistinguishable from a uniformly distributed random variable over GF$(d)$, $i.e.$, $P(S_v)=1/d$. As a result, the entropy of $S_v$ is $H(S_v)=\log d$.

Just like Shamir's $(t,n)$-SS \cite{Shamir1979}, for $m\ge{t}$, each private value can be computed by Lagrange's interpolation
\begin{equation}
s_v=\sum_{j=1}^{m}{f_v({x_j})\prod\limits_{r = 1,r \ne j}^m {\frac{{{x_r}}}{{{x_r} - {x_j}}}} \bmod d}.
\end{equation}
So each $s_v$ is a random variable uniformly distributed over GF$(d)$ with less than $t$ shares. In other words, $t-1$ participants $\{{\rm{Bob}}_j, j=1,2,\dots,t-1\}$ conspire with the shares $\Omega=\{f_v(x_j),j=1,2,\dots,t-1\}$ available, they can obtain $s_v$ only with the probability $P(s_v|\Omega)=1/d$, $i.e.$, $P(S_v|\Omega)=1/d$ due to ${R_v}'={p_{v0}}+d-s_v\bmod d$ and $S_v=p_{v0}$. Consequently, the conditional entropy can be computed as $H(S_v|\Omega)=\log d$.

In conclusion, we finally have
\begin{equation}\label{13}
I(S_v;{\Omega}) = H(S_v) - H(S_v|{\Omega}) =0.
\end{equation}

Because of the $k-1$ secrets $S_u,u=1,2,\dots,k-1$ and the verification value $S_k$ all satisfying the above Eq.(\ref{13}), thus with respect to probability distribution of secrets in the secret space GF$(d)$, we can conclude that the proposed scheme is perfect.
\end{proof}

To free the trusted third party, the last participant is appointed to measure the qudits and thus, he directly knows the measurement results ($i.e.$, the summations of each secret and random numbers). By publishing fake measurement results, the last participant itself recovers true secrets while making others obtain wrong secrets. Of course, other participant can also cheat the rest ones by using a fake random number in the secret reconstruction. Moreover, the qudits are obviously vulnerable to be eavesdropped. Thus the proposed scheme establishes a verification mechanism to detect such cheating or eavesdropping.

Here we consider the error rate of the verification mechanism, which is the probability that the verification mechanism does not detect wrong secrets. The ideal error rate is certainly 0 since it means all false secrets can be detected during secret reconstruction.
\begin{theorem}In the proposed scheme, the error rate of the verification mechanism converges to 0 when the dimension of the secret approaches to infinity. That is,
\begin{equation}
	 \mathop {\lim }\limits_{d \to \infty }{r_e=0},
\end{equation}
where $r_e$ denotes the error rate and $d$ represents the dimension of the secret.
\end{theorem}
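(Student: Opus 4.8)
The plan is to read $r_e$ as the probability that a corrupted reconstruction nevertheless satisfies the consistency check $\prod_{u=1}^{k-1}p_{u0}=p_{k0}\bmod d$, and to show that this probability is of order $1/d$. First I would reduce every relevant attack to a common normal form: whether the disturbance is caused by eavesdropping, by a dishonest participant broadcasting a fake random number, or by the last participant publishing a fake outcome $R_v$, its net effect on the value the honest parties reconstruct is an additive shift $p_{v0}\mapsto p_{v0}+e_v\bmod d$ on each of the $k$ registers. The decisive point, which I would establish before anything else, is that the shift vector $(e_1,\dots,e_k)$ is fixed \emph{independently} of the secrets: the outcomes $R_v$ are announced \emph{before} the random numbers are exchanged, and that exchange is carried out through the unconditionally secure bit-commitment protocol \cite{Kent2012} (or simultaneously), so no party can evaluate $p_{v0}=R_v-\sum_{j=1}^{m}p_{vj}\bmod d$ until all $e_v$ are already locked in. Consequently, invoking Theorem~\ref{T1}, from the attacker's standpoint the secrets $p_{u0}$ remain independent and uniform over $\mathrm{GF}(d)$, and the error vector is statistically independent of them.

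Granting this, a corrupted run survives the check precisely when
\begin{equation}
\prod_{u=1}^{k-1}\bigl(p_{u0}+e_u\bigr)=\Bigl(\prod_{u=1}^{k-1}p_{u0}\Bigr)+e_k\bmod d ,
\end{equation}
where I have substituted the genuine relation $p_{k0}=\prod_{u=1}^{k-1}p_{u0}$. I would move everything to one side and regard the difference as a polynomial $Q(p_{10},\dots,p_{(k-1)0})$ in the uniform variables. The top monomial $\prod_{u}p_{u0}$ cancels between the two products, so $\deg Q\le k-2$; its degree-$(k-2)$ part equals $\sum_{i=1}^{k-1}e_i\prod_{u\neq i}p_{u0}$, a combination of distinct monomials that is nonzero exactly when some $e_i\neq0$ with $i\le k-1$. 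In the only remaining nontrivial case, $e_1=\dots=e_{k-1}=0$ and $e_k\neq0$, one gets $Q\equiv-e_k$, again nonzero. Hence, for $k\ge3$, every admissible (not-all-zero) error vector makes $Q$ a nonzero polynomial of degree at most $k-2$.

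The error rate is then the probability that the uniform tuple $(p_{10},\dots,p_{(k-1)0})$ is a root of this fixed nonzero polynomial, which I would bound by the Schwartz--Zippel lemma: a nonzero polynomial of degree at most $k-2$ over $\mathrm{GF}(d)$ vanishes on at most a fraction $(k-2)/d$ of $\mathrm{GF}(d)^{\,k-1}$. Averaging over any blind corruption strategy therefore gives $r_e\le(k-2)/d$, and letting $d\to\infty$ yields $\lim_{d\to\infty}r_e=0$, as claimed. The main obstacle is exactly the independence claim of the first paragraph: the whole estimate collapses if a cheater can pick $e_v$ \emph{after} learning $p_{v0}$, so the security of the simultaneous/bit-commitment exchange must be used in an essential way and cannot be treated as a throwaway remark. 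A secondary caveat worth flagging is the degenerate case $k=2$, where $\deg Q=0$ and the multiplicative test is invariant under a common shift $e_1=e_2$; there the bound is vacuous, and one genuinely needs $k\ge3$ (at least two factors in the verification product) for the mechanism to bite.
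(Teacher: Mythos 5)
Your proof is correct, but it takes a genuinely different route from the paper's. The paper models the cheating last participant as publishing $k$ fake results $R_v'$ chosen \emph{uniformly at random} and simply counts: for each fixed tuple of pads $(N_1,\dots,N_k)$ there are exactly $d^{k-1}$ tuples $(R_1',\dots,R_k')$ satisfying $\prod_{u=1}^{k-1}(R_u'-N_u)=R_k'-N_k\bmod d$ out of $d^k$ possible ones, whence $r_e=d^{k-1}/d^k=1/d$. You instead fix an \emph{arbitrary} additive error vector $(e_1,\dots,e_k)$, argue it is committed before the pads are revealed, and bound the passing probability over the (uniform, from the cheater's view) reconstructed values by Schwartz--Zippel applied to the degree-$(k-2)$ polynomial left after the leading monomials cancel. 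This buys two things the paper's argument does not give: (i) it covers a worst-case cheater who optimizes the fake values using everything he knows, rather than one who guesses uniformly; and (ii) it exposes the degenerate case $k=2$, where the check collapses to $S_1=S_2$ and is invariant under a common shift $e_1=e_2$, so an adaptive last participant passes verification with probability $1$ --- a real gap in the paper's stated bound, which nominally allows $k\ge 2$. The costs are minor: your bound $(k-2)/d$ is slightly weaker than the paper's $1/d$ for $k>3$ (both vanish as $d\to\infty$); your appeal to Theorem~\ref{T1} is not quite the right source of uniformity here (the cheater holds $t$ or more shares --- the relevant randomness is the other participants' pads $N_v$, which you do in fact also invoke via the simultaneity or bit-commitment of the exchange); and the reduction of a quantum eavesdropper's disturbance to a deterministic additive shift is an idealization that the paper shares but likewise does not justify.
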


\begin{proof}
After the measurement, the last participant ${\rm{Bob}}_m$ can get the correct measurement results ${R_v}$, but cannot know the value about the sum of other participants' random numbers, which is a constant variable ${N_v=\sum_{j=1}^{m}{p_{vj}}}, v=1,2,\dots,k$, known to him. Assume that the last participant ${\rm{Bob}}_m$ publishes the wrong measurements $R_v^\prime \neq{R_v}$ to cheat others. Obviously, if $\prod\nolimits_{u=1}^{k-1} ({R_u^\prime-{N_u}}) =R_k^\prime-{N_k}\bmod d$, happens to hold, then the wrong measurement results will convince other participants and cannot be detected. In this case, the verification mechanism fails to detect the cheating and only the cheater can recover the true secrets while others cannot.
To be specific, if ${\rm{Bob}}_m$ chooses $k$ values $R_v^\prime,v=1,2,\dots,k $ in GF$(d)$ randomly and uniformly as measurement results and publishes them to the other participants, thus there will be totally $d^k$ tuples of  $\{R_1^\prime, R_2^\prime,\dots,R_k^\prime\}$. Note that ${R_v}, v=1,2,\dots,k$ are published before all participants exchange their random numbers ${p_{vj}},j=1,2,\dots,m$ to achieve the sums ${N_v}$. Since each participant ${\rm{Bob}}_j$ privately and independently chooses his random numbers ${p_{vj}}$, thus in the view of all participants, ${N_v=\sum_{j=1}^{m}{p_{vj}}}$ are indistinguishable from random numbers uniformly distributed in GF$(d)$. As a result, it is same for $(R_v^\prime-{N_v})$, which are also indistinguishable from random numbers uniformly distributed in GF$(d)$ for participants. In this case, since given $\{N_1,N_2,\dots,N_k\}, $ $R_k^{\prime}$ can always be determined for randomly selected sets of $\{R_1^\prime, R_2^\prime,\dots,R_{k-1}^{\prime}\}$, so there are totally $d^{k-1}$ randomly selected tuples of $\{R_1^\prime, R_2^\prime,\dots,R_k^\prime\}$ satisfying
\begin{equation}
	\prod\nolimits_{u=1}^{k-1} ({R_u^\prime-{N_u}}) =R_k^\prime-{N_k}\bmod d.
\end{equation}

The result is the same if any other participant, $i.e.$, Bob$_j,j=1,2,\dots,m-1$ cheats by releasing any different random number ${p_{vj}^\prime}\ne {p_{vj}}, v =1,2,\dots,k$ when exchanging random numbers.

Therefore, the error rate of the verification mechanism is $r_e=d^{k-1}/d^k=1/d$. Thus,
\begin{equation}
	\mathop {\lim }\limits_{d \to \infty }{r_e}=\mathop {\lim }\limits_{d \to \infty }{\frac{d^{k-1}}{d^{k}}}=\mathop {\lim }\limits_{d \to \infty }{\frac{1}{d}}=0. 	
\end{equation}
That is, when $d$ approaches to infinity the error rate will converge to 0.

To sum up, the verifiable mechanism of the scheme can detect the cheating by each participant with the probability $(d-1)/d$, which converges to $100\%$ if $d$ is larger enough.
\end{proof}

\section{Comparisons and discussion}

\subsection{Related work and comparisons}
Since the proposal of the first QSS \cite{Hillery1999}, various extension schemes have been proposed in last two decades. Many of them are based on entangled states (HBB-type). Due to the high cost of engineering multiparticle entangled states, efforts have been made for more economical HBB-type QSS through reducing the number of required particles \cite{Tittel2001,Deng2005}. In a work \cite{Yu2008}, the authors further generalize HBB-type QSS to $d$-level platform by utilizing multiparticle ($>3$) entangled GHZ states. An interesting entanglement-based QSS using entangled state as the secure carriers and splitters of information has been studied in \cite{Bagherinezhad2003}. However, all those schemes are poor in scalability with growing participants and may easily render an invalid run because a participant may fail in measurement due to inefficient detection. Different from entanglement-based HBB-type QSS, some entanglement-free schemes \cite{Schmid2005,Tavakoli2015,Karimipour2015} have also been proposed. But these schemes will be of less interest for secret sharing due to some drawbacks, $e.g.$ they are all ($n, n$) structures which are not flexible under different applications and they offer no unconditional security.

Considering QSS schemes with ($t, n$) structure, the first one was proposed in 1999 \cite{Cleve1999}. It shows that the only constraint on the existence of ($t, n$) threshold schemes comes from quantum no-cloning theorem, which requires $n<2t$. However, the coding process given in this paper, although efficient, is difficult to implement. It's also hard to extend with fixed mapping rules. Later, some other schemes with general ($t, n$) threshold structure were proposed. Among these schemes, \cite{Lance2003,Lau2013} benefit from continuous variable and thus, they are easier to be implemented in practical experiments. Others employ graph states, which provide a superb resource for secret sharing, to construct QSS schemes \cite{Markham2008,Keet2010}. Recently, a new method was developed by taking advantage of the ability of exactly distinguishing orthogonal multipartite entangled states under restricted local operation and classical communication \cite{Rahaman2015,Wang2017}. Besides certain special quantum systems, classical ($t, n$)-SSs can also be incorporated to support the threshold structure in QSS schemes \cite{Tokunaga2005,Qin2015,Lu20182}. But in those schemes, a trusted third party ($e.g.$, the dealer) is required to measure the quantum states.

Thinking about the qubit efficiency, we use ${\eta _q} = \frac{c}{q}$ to denote it, where $c$ is the number of shared classical secret's bits and $q$ is the number of qubits used in the transmission and eavesdropping checking. It can be seen in our scheme, $k$ qudits are used to share $k-1$ secrets in GF$(d)$, thus ${\eta _q} =  \frac{{(k-1)\log d}}{{k\log d}} = \frac{k-1}{k}$. In \cite{Hillery1999}, the prepared 3-qubit GHZ can only establish a bit joint secret, then ${\eta _q}$ is $1/3$. In Schmid's scheme \cite{Schmid2005}, only a qubit is used to sequentially communicate with $n$ users, and one bit is shared, thus ${\eta _q}$ is 1. In the paper \cite{Rahaman2015}, to share a bit classical secret, a pair of distance-$r(>0)$ orthogonal $n$-qubit Dicke states are used, obviously the qubit efficiency ${\eta _q}$ is $1/2n$.

Compared with some previous schemes in Table \ref{table 1}, our scheme can stand out because it only employs sequential $d$-level unitary operations in association with classical ($t,n$) threshold secret sharing on a single qudit. It shows the great scalability and possesses strict threshold structure with information theoretical security. Furthermore, benefiting from the verification mechanism, the proposed ($t, n$)-QSS scheme no longer requires any trusted third party responsible for measurement results, and any cheating strategy of each participant or eavesdropping can be detected.
\begin{table*}
\scriptsize
\centering
\resizebox{\textwidth}{!}{
\begin{threeparttable}
	
    \caption{Comparisons between previous QSSs and ours.}\label{table 1}
	
	\begin{tabular}{l|c|c|c|c|c|c}
		\hline
         Schemes &Our scheme &Ref.\cite{Shamir1979} &Ref.\cite{Cleve1999}&Ref.\cite{Hillery1999}&Ref.\cite{Schmid2005}&Ref.\cite{Rahaman2015}\\
        \hline
        Initial state & single qudit & &single qudit& \tabincell{c}{3-qubit\\ GHZ state} &single qubit & \tabincell{c}{$n$-qubit \\Dicke state}\\ \hline
        Shared secrets$^1$ &both  &C&Q&C&C&C \\ \hline
        ($t, n$) threshold & Yes &Yes&Yes&No&No&Yes\\ \hline
        Qubit efficiency & ($k$-1)/$k$ & & &1/3 &1&1/(2$n$)\\
        \hline
        Cheat detection &Yes& No &No&No&No&Yes\\ \hline
	\end{tabular}

\begin{tablenotes}
\item[1] Shared secrets are quantum state(Q) and classical message(C).
\end{tablenotes}
\end{threeparttable}}

\end{table*}

\subsection{Discussion}
For the experimental aspects, these two works \cite{Schmid2005,Hai2013} implemented a quantum secret sharing scheme based on a single qubit (a two-level system). Our $d$-level QSS scheme can also be realized using the same technologies if $d=2$. However, qudits with their state space of dimension $d > 2$, open fascinating experimental prospects. The quantum properties of their states provide new potentialities for quantum information, quantum contextuality, expressions of geometric phases, facets of quantum entanglement and many other foundational aspects of the quantum world that are unapproachable via qubits \cite{Godfrin2018}. So far, the experimental implementations of preparing and manipulating qudit states for large $d$ are still under active investigation. For example, in \cite{Godfrin2018}, they have experimentally investigated the quantum dynamics of a qudit ($d=4$) that consists of a single 3/2 nuclear spin embedded in a molecular magnet transistor geometry, coherently driven by a microwave electric field. In the paper \cite{Giordani2019}, they affirm the potential that one-dimensional quantum walk dynamics represents a valid tool in the task of engineering arbitrary quantum states in a linear-optics platform. Moreover, confirming the feasibility of the protocol by preparing and measuring different classes of relevant qudit states in a six-dimensional space. So, with the above technologies becoming riper, our scheme can be experimented in the arbitrary $d$-dimension.

\section{Conclusion}
This work proposes a verifiable framework for threshold QSS without entanglement. Such a scheme enables QSS without entanglement by incorporating any single quantum system and utilizes existing classical $(t,n)$-SS to keep $(t,n)$-threshold structure. Besides, a verification mechanism is established for thwarting cheating and eavesdropping attacks. As an example, we demonstrate a concrete $(t,n)$-QSS scheme using QFT and the generalized Pauli operation. It shares $k-1$ secrets and an extra verification value with sequential applications of the generalized Pauli operation. At last, each participant can independently recover secrets and verify the correctness from measurement results. We further prove that the proposed scheme is information theoretically secure and the verification mechanism is sufficient since the error rate converges to 0 if the dimension of the secret approaches infinity. Therefore, this class of entanglement-free ($t, n$)-QSS schemes constructed under our framework can address the drawbacks in previous entanglement-free QSS and will be more useful in quantum communication networks due to their intrinsic scalability, flexibility and information theoretical security.

\section*{Acknowledgments}
We would like to thank the anonymous reviewers for helpful suggestions.
This work is supported by the National Natural Science Foundation of China under Grant Nos. 61572454, 61572453,
61520106007 and Anhui Initiative in Quantum Information Technologies under Grant No. AHY150100.

\appendix
\section{List of abbreviations}\label{A}
In this section, we list the full descriptions for the most frequently used abbreviations in the main text.
  \begin{table}[H]
  \centering
  \scriptsize
  \caption{List of useful abbreviations.}\label{table A1}
  \begin{threeparttable}
	
	
	\begin{tabular}{c|c}
		\hline
         Abbreviation & Description \\
        \hline
        QSS & Quantum secret sharing\\
        \hline
        HBB & Hillery-Buzek-Berthiaume \\
        \hline
        GHZ& Greenberger-Horne-Zeilinger \\
        \hline
        SD & classical private Share Distribution\\
        \hline
        SE &Secret Encoding\\
        \hline
        SO &Sequential Operation of single quantum system\\
        \hline
        SR &Secret Reconstruction\\
        \hline
        GF &Galois Field\\
           &GF($d$)=$\{0,1,2,\dots,d-1\}$ \\
        \hline
        QFT &Quantum Fourier transform\\
        \hline
        IQFT,QFT$^{-1}$ &Inverse Quantum Fourier Transform\\
        \hline
	\end{tabular}
\end{threeparttable}
\end{table}

\section{Proof of Eq.(\ref{7})}\label{B}
  In the paper, with $\omega=e^{2\pi i/d}$ we can first prove \[\sum\nolimits_{j = 0}^{d - 1} {{\omega ^{rj}}}  = \left\{ {\begin{array}{*{20}{c}}
{d,r = 0\bmod d}\\
{0,r \ne 0\bmod d.}
\end{array}} \right.\]
\begin{proof} We have $\omega ^ { d } = e ^ { 2 \pi i } = \cos ( 2 \pi ) + i \sin ( 2 \pi ) = 1$. At first, we suppose $r = 0\bmod d$, thus \[\sum\nolimits_{j = 0}^{d - 1} {{\omega ^{rj}}}  = \sum\nolimits_{j = 0}^{d - 1} {{\omega ^{kdj}}}  = \sum\nolimits_{j = 0}^{d - 1} 1  = d(k \in Z).\]
\par If $r \ne 0\bmod d$, so ${\omega ^r} \ne 1$. Therefore, by using the sum of geometric series, we can get \[\sum\nolimits_{j = 0}^{d - 1} {{\omega ^{rj}}}  = 1 + {\omega ^r} + {\omega ^{2r}} +  \cdots  + {\omega ^{(d - 1)r}} = \frac{{1 - {\omega ^{dr}}}}{{1 - {\omega ^r}}} = \frac{{1 - 1}}{{1 - {\omega ^r}}} = 0.\]
This completes the proof.
\end{proof}
\par Next we can define $\left| {{\mu _j}} \right\rangle  = {\rm QFT}\left| j \right\rangle  = \frac{1}{{\sqrt d }}\sum\limits_{k = 0}^{d - 1} {{\omega ^{jk}}} \left| k \right\rangle,j=0,1,\dots,d-1 $. Moreover, consider the
generalized Pauli operators $X : = \sum _ { k = 0 } ^ { d - 1 } | k + 1 \rangle \langle k |$ and $Z:=\sum _ { k = 0 } ^ { d - 1 } \omega ^ { k } | k \rangle \langle k |$. After performing these two operators on the state $\left| {{\mu _j}} \right\rangle$ we have \[X\left| {{\mu _j}} \right\rangle  = {\omega ^{ - j}}\left| {{\mu _j}} \right\rangle ,Z\left| {{\mu _j}} \right\rangle  = \left| {{\mu _{j + 1}}} \right\rangle .\]
Here we proof the transformation of the generalized Pauli operator $X$.\\
\begin{proof}
\begin{equation*}
\begin{aligned}
X\left| {{\mu _j}} \right\rangle  &= \frac{1}{{\sqrt d }}\sum\limits_{k = 0}^{d - 1} {{\omega ^{jk}}} \left| {k + 1} \right\rangle  = {\omega ^{ - j}}\frac{1}{{\sqrt d }}\sum\limits_{k = 0}^{d - 1} {{\omega ^{j(k + 1)}}} \left| {k + 1} \right\rangle\\
&\mathop  = \limits^{k + 1 = r} {\omega ^{ - j}}\frac{1}{{\sqrt d }}(\sum\limits_{r = 1}^{d - 1} {{\omega ^{jr}}} \left| r \right\rangle  + \left| 0 \right\rangle ) = {\omega ^{ - j}}\sum\limits_{r = 0}^{d - 1} {{\omega ^{jr}}} \left| r \right\rangle  = {\omega ^{ - j}}\left| {{\mu _j}} \right\rangle.
\end{aligned}
\end{equation*}
\end{proof}
As the definition in the paper, the generalized Pauli operation $U_{m,n}$ is
\begin{equation*}
{U_{m,n}} = \sum\limits_{k = 0}^{d - 1} {{\omega ^{n \cdot k}}\left| {k + m} \right\rangle } \left\langle k \right|,
\end{equation*}
where $ m,n \in {\rm{GF}}(d) $. Moreover, it can be written as ${U_{m,n}}=X^mZ^n$. Because with ${X^m} = \sum\nolimits_{k = 0}^{d - 1} {\left| {k + m} \right\rangle } \left\langle k \right|,{Z^n} = \sum\nolimits_{k = 0}^{d - 1} {{\omega ^{nk}}\left| k \right\rangle } \left\langle k \right|$,
we have
\begin{equation*}
\begin{aligned}
{U_{m,n}} &= \sum\limits_{k = 0}^{d - 1} {{\omega ^{n \cdot k}}\left| {k + m} \right\rangle } \left\langle k \right|\\
&=\left( {\sum\nolimits_{k = 0}^{d - 1} {\left| {k + m} \right\rangle } \left\langle k \right|} \right)\left( {\sum\nolimits_{k = 0}^{d - 1} {{\omega ^{nk}}\left| k \right\rangle } \left\langle k \right|} \right) = {X^m}{Z^n}.
\end{aligned}
\end{equation*}
So, the Eq.(\ref{4}) in the paper can be rewritten as
\begin{equation*}
\begin{aligned}
{U_{m,n}}{\rm{QFT}}\left| j \right\rangle
    &= {U_{m,n}}\frac{1}{{\sqrt d }}\sum\limits_{k = 0}^{d - 1} {{\omega ^{j \cdot k}}\left| k \right\rangle } \\
    &= {U_{m,n}}\left| {{\mu _j}} \right\rangle=  X^mZ^n\left| {{\mu _j}} \right\rangle= \omega ^{-m(j+n)}\left| {{\mu _{j+n}}} \right\rangle.
\end{aligned}
\end{equation*}
Therefore, we finally give the proof of the Eq.(\ref{7})\\
\begin{proof}
\begin{equation*}
\begin{aligned}
{\left| \Psi  \right\rangle _m}
    &=\left( {\prod\limits_{j = 1}^m {{U_{{p_j},{p_j} + {q_j}}}} } \right){\left| \Psi  \right\rangle _0}=\prod\limits_{j = 1}^m {{X^{{p_j}}}{Z^{{p_j} + {q_j}}}} \left| {{\mu _{{p_0} + {q_0}}}} \right\rangle \\
    &=\omega ^{-p_1(p_0+q_0+p_1+q_1)}\prod\limits_{j = 2}^m {{X^{{p_j}}}{Z^{{p_j} + {q_j}}}} \left| {{\mu _{{p_0} + {q_0}+{p_1} + {q_1}}}} \right\rangle \\
    &=\xi_m\left| {{\mu _{\sum\nolimits_{j = 0}^m {({p_j} + {q_j})} }}} \right\rangle\\
    &=\frac{\xi_m}{{\sqrt d }}\sum\limits_{k = 0}^{d - 1} {{\omega ^{\left( {\sum\nolimits_{j = 0}^m {({p_j} + {q_j})} } \right) \cdot k}}} \left| k \right\rangle \\
    &=\frac{\xi_m}{{\sqrt d }}\sum\limits_{k = 0}^{d - 1} {{\omega ^{(\sum\nolimits_{j = 0}^m {{p_j}} {\rm{ + }}d - s + \sum\nolimits_{j = 1}^m {{c_j}} ) \cdot k}}\left| k \right\rangle } \\
    &=\frac{\xi_m}{{\sqrt d }}\sum\limits_{k = 0}^{d - 1} {{\omega ^{(\sum\nolimits_{j = 0}^m {{p_j}}+L\cdot d ) \cdot k}}\left| k \right\rangle }, (L\in Z)
\end{aligned}
\end{equation*}
with the overall phase term $\xi_m={\omega ^{ - \sum\nolimits_{a = 1}^m {{p_a}} \left( {\sum\nolimits_{b = 0}^{a} {({p_b} + {q_b})} } \right)}}$.
\end{proof}



\end{document}